\newcommand{\eqn}[1]{(\ref{#1})}
\newcommand{\beql}[1]{\begin{equation}\label{#1}}
\newcommand{\eeq}{\end{equation}}
\newtheorem{theo}{Theorem}
\newtheorem{defi}{Definiton}
\newtheorem{lem}{Lemma}
\title{A simple linear space algorithm for computing a longest common increasing subsequence}
\author{Daxin Zhu, Lei Wang, Tinran Wang, and Xiaodong Wang}
\begin{document}
\maketitle

\begin{abstract}
This paper reformulates the problem of finding a longest common increasing subsequence of the two given input sequences in a very succinct way. An extremely simple linear space algorithm based on the new formula can find a longest common increasing subsequence of sizes $n$ and $m$ respectively, in time $O(nm)$ using additional $\min\{n,m\}+1$ space.
\end{abstract}

\section{Introduction}

The study of the longest common increasing subsequence (LCIS) problem originated from two classical subsequence problems, the longest common subsequence (LCS) and the longest increasing subsequence (LIS).
The classic algorithm to the LCS problem is the dynamic programming solution of Wagner and Fischer\cite{9}, with $O(n^2)$ worst case running time. Masek and Paterson\cite{7} improved this algorithm by using the "Four-Russians" technique to reduce its running time to $O(n^2/\log n)$ in the worst case. Since then, there has been not much improvement on the time complexity in terms of $n$ found in the literature.
There is also a rich history for the longest increasing subsequence problem, e.g., see \cite{2,3,5}.

The LCIS problem for input sequences $X$ and $Y$ consists of finding a subsequence $Z$ of $X$ and $Y$ which is both an increasing sequence and a common subsequence of $X$ and $Y$ with maximal length.
Yang et al. \cite{10} designed a dynamic programming algorithm that finds an LCIS of two input sequences of size $n$ and $m$ in $O(nm)$ time and space.
If the length of the LCIS, $l$, is small, Katriel and Kutz \cite{6} gave a faster algorithm which runs in $O(nl\log n)$ time.
If $r$, the total number of ordered pairs of positions at which the two input sequences match, is  relatively small, Chan et al. \cite{1} gave a faster algorithm which runs in $O(\min(r\log l,nl + r)\log\log n + n\log n)$ time where $n$ is the length of each sequence and $r$ is the total number of ordered pairs of positions at which the two sequences match and $l$ is the length of the LCIS.
A first linear space algorithm was proposed by Yoshifumi Sakai \cite{8}. The space cost of the algorithm of Yang et al. was reduced to linear by a careful application of Hirschberg¡¯s divide-and-conquer method \cite{4}. The space complexity of the algorithm of Katriel and Kutz \cite{6} was also reduced from $O(nl)$ to $O(m)$ by using the same divide-and-conquer method of Hirschberg \cite{4}.

In this paper, we solve the problem in a new insight. Based on a novel recursive formula, we find a very simple linear space algorithm but not the Hirschberg¡¯s divide-and-conquer method to solve the problem.

\section{Definitions and Terminologies}

In the whole paper we will use $X=x_1x_2\cdots x_n$ and $Y=y_1y_2\cdots y_m$ to denote the two input sequences of size $n$ and $m$ respectively, where each pair of elements in the sequences is comparable.

Some terminologies on the LCIS problem are usually referred to in the following way.

\begin{defi}\label{df1}\hfill

A subsequence of a sequence $X$ is obtained by deleting zero or more characters from $X$ (not necessarily contiguous).
For a given sequence $X=x_1x_2\cdots x_n$ of length $n$, the $i$th character of $X$ is denoted as $x_i \in \sum$ for any $i=1,\cdots,n$.
We set the $i$th prefix of $X=x_1x_2\cdots x_n$, for $i=0,1,\cdots,n$, as $X_i=x_1x_2\cdots x_i$ , and $X_0$ is the empty sequence.
\end{defi}

\begin{defi}\label{df2}\hfill

An appearance of sequence $Z=z_1z_2\cdots z_k$ in sequence $Y=y_1y_2\cdots y_m$, starting at position $j$ is a sequence of strictly increasing indexes $j_1,j_2,\cdots,j_k$ such that $j_1=j$, and $Z=y_{j_1},y_{j_2},\cdots,y_{j_k}$.
The sequence $Z$ is referred to as a subsequence of $Y$.

Given two sequences $X=x_1x_2\cdots x_n$ and $Y=y_1y_2\cdots y_m$, we say that a sequence $Z$ is a common subsequence of $X$ and $Y$ if $Z$ is a subsequence of both $X$ and $Y$.
For the given sequence $X=x_1x_2\cdots x_n$, if $x_1<x_2<\cdots <x_n$, then $X$ is called an increasing
sequence.
For the two sequences $X=x_1x_2\cdots x_n$ and $Y=y_1y_2\cdots y_m$, we say that a sequence $Z$ is a common increasing subsequence (CIS) of $X$ and $Y$ if $Z$ is an increasing sequence and a common subsequence of $X$ and $Y$.

The longest common increasing subsequence of $X$ and $Y$, is a common increasing subsequence whose length is the longest among all common increasing subsequences of the two given sequences.
\end{defi}

\noindent{\bf Example.}\hfill

Let $X=(3,5,1,2,7,5,7)$ and $Y=(3,5,2,1,5,7)$. We have, $n=7$ and $m=6$. $X_3=(3,5,1)$, and $X_0$ is the empty sequence.
$Z=(3,1,2,5)$ is a subsequence of $X$  with corresponding index sequence $(1,3,4,6)$.

The subsequence $(3,5,1)$ and $(3,5,7)$ are common subsequences of both $X$ and $Y$, and the subsequence $(3,5,7)$ is an LCIS of $X$ and $Y$.

\begin{defi}\label{df3}\hfill
For each pair $(i,j),1\leq i\leq n, 1\leq j\leq m$, the set of all LCISs of $X_i$ and $Y_j$ that ends on $y_j$ is denoted by $LCIS(X_i,Y_j)$. The length of an LCIS in $LCIS(X_i,Y_j)$ is denoted as $f(i,j)$.
\end{defi}

\begin{defi}\label{df5}\hfill

A match for two sequences $X=x_1x_2\cdots x_n$ and $Y=y_1y_2\cdots y_m$ is an ordered pair $(i,j),1\leq i\leq n, 1\leq j\leq m$ such that $x_i=y_j$.
The match function $\delta(i,j)$ of $X$ and $Y$ can be defined as:
\beql{eq21}
\delta(i,j)=
\begin{cases}
1, & \text{if $x_i=y_j$} \\
0, & \text{otherwise.}
\end{cases}
\eeq
\end{defi}

\begin{defi}\label{df6}\hfill

For each pair $(i,j), 1\leq i\leq n, 1\leq j\leq m$, the index set $\beta(i,j)$ can be defined as follows:

\beql{eq22}
\beta(i,j)=\{t\mid 1\leq t<j, y_t<x_i\}
\eeq
\end{defi}

\section{A recursive formula}

Similar to the $O(nm)$ solution of Wagner and Fischer for computing the length of an LCS, a standard dynamic programming
algorithm can be built based on the following recurrence for the length $f(i,j)$ of an LCIS in $LCIS(X_i,Y_j)$, $1\leq i\leq n, 1\leq j\leq m$.

\begin{theo}\label{th1}\hfill

Let $X=x_1x_2\cdots x_n$ and $Y=y_1y_2\cdots y_m$ be two input sequences over an alphabet $\sum$ of size $n$ and $m$ respectively.
For each $1\leq i\leq n, 1\leq j\leq m$, $f(i,j)$, the length of an LCIS of $X_i$ and $Y_j$ that ends on $y_j$, can be computed by the following dynamic programming formula.

\beql{eq31}
f(i,j)=
\begin{cases}
0, & \text{if $i=0$  or $j=0$}, \\
f(i-1,j), & \text{if $i,j>0$ and $x_i\neq y_j$,}\\
1+\displaystyle\max_{t\in\beta(i,j)} f(i-1,t), & \text{if $i,j>0$ and $x_i=y_j$}.
\end{cases}
\eeq

\end{theo}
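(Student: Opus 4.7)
The plan is to prove the recurrence by case analysis on the three branches of \eqn{eq31}, using induction on $i$ with $j$ free. The base cases $i=0$ and $j=0$ are immediate: $X_0$ is empty so no common subsequence exists, and for $j=0$ there is no element $y_0$ on which to end. The structural observation that drives both inductive cases is that $x_i$, having the maximum index in $X_i$, can only appear as the \emph{last} character of any subsequence of $X_i$ in which it is used.

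\textbf{Case $x_i\neq y_j$.} I will show that the set of CISs of $X_i$ and $Y_j$ ending on $y_j$ coincides with the corresponding set for $X_{i-1}$ and $Y_j$. If $Z$ belongs to the former and used $x_i$, then $x_i$ would be its last character, equal to $y_j$, a contradiction. So $Z$ is a subsequence of $X_{i-1}$; the reverse inclusion is trivial since $X_{i-1}$ is a prefix of $X_i$. Hence $f(i,j)=f(i-1,j)$.

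\textbf{Case $x_i=y_j$.} I will prove the equality via two inequalities. For $\geq$, given any $t\in\beta(i,j)$ and any CIS $Z$ of $X_{i-1}$ and $Y_t$ ending on $y_t$ with $|Z|=f(i-1,t)$, appending $y_j$ matched to $x_i$ yields a CIS of $X_i$ and $Y_j$ ending on $y_j$; strictness is preserved because $y_t<x_i=y_j$, so $f(i,j)\geq 1+f(i-1,t)$. For $\leq$, take a CIS $Z$ of $X_i$ and $Y_j$ ending on $y_j$ with $|Z|=f(i,j)$; since $x_i=y_j$, the $X$-index sequence witnessing $Z$ may be chosen to terminate at $i$, and removing the last character $y_j$ leaves a CIS $Z'$ of $X_{i-1}$ and $Y_t$ ending on $y_t$, where $t<j$ is the penultimate $Y$-index. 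The strict-increase condition on $Z$ forces $y_t<y_j=x_i$, placing $t$ in $\beta(i,j)$ and giving $|Z|-1\leq f(i-1,t)$.

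The main delicacies are twofold: \emph{(i)} the rewriting step in the $\leq$ direction relies on the freedom to choose the $X$-matching so that it ends at $i$, which is available precisely because $x_i=y_j$; and \emph{(ii)} the degenerate subcase in which $\beta(i,j)=\emptyset$ while $x_i=y_j$, where the singleton $(y_j)$ is the unique CIS ending on $y_j$ and the formula should yield $f(i,j)=1$, so I will adopt the convention $\max\emptyset=0$ to make the recurrence evaluate correctly. Apart from these two points of bookkeeping, the argument is mechanical.
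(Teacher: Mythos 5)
Your proof is correct and follows essentially the same route as the paper: the non-match case is handled by showing the set of CISs ending on $y_j$ is unchanged when $x_i$ is dropped, and the match case by the same two inequalities (append $y_j$ to a CIS ending on some $y_t<x_i$, and strip the last character of an optimal CIS). Your explicit treatment of the degenerate situation via the convention $\max\emptyset=0$ is a point of care the paper's proof glosses over, but it does not change the substance of the argument.
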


\begin{proof}\hfill

(1) The initial case is trivial.

(2) In the case of $x_i\neq y_j$, we have,
$Z\in LCIS(X_i,Y_j)$ if and only if $Z\in LCIS(X_{i-1},Y_j)$, and thus $LCIS(X_i,Y_j)=LCIS(X_{i-1},Y_j)$. Therefore, $f(i,j)=f(i-1,j)$.

(3) In the case of $x_i=y_j$, let $Z=z_1z_2\cdots z_k\in LCIS(X_i,Y_j)$ be an an LCIS of $X_i$ and $Y_j$ that ends on $y_j$. In this case, we have, $f(i,j)=k$, and $z_1z_2\cdots z_{k-1}$ must be a common increasing subsequence of $X_{i-1}$ and $Y_t$ for some $1\leq t<j$, and $z_{k-1}=y_t<y_j$. It follows that $k-1\leq LCIS(X_{i-1},Y_t)$, and thus

\beql{eq32}
f(i,j)\leq 1+\displaystyle\max_{t\in\beta(i,j)} f(i-1,t)
\eeq

On the other hand, let $Z=z_1z_2\cdots z_k\in LCIS(X_{i-1},Y_t)$ for some $1\leq t<j$, and $z_{k}=y_t<y_j$, then $Z\oplus y_j$ must be a common increasing subsequence of $X_{i}$ and $Y_j$ ending on $y_j$. This means, $k+1\leq f(i,j)$, and thus $f(i-1,t)+1\leq f(i,j)$. It follows that
\beql{eq33}
f(i,j)\geq 1+\displaystyle\max_{t\in\beta(i,j)} f(i-1,t)
\eeq

Combining \eqn{eq32} and \eqn{eq33}, we have $f(i,j)=1+\displaystyle\max_{t\in\beta(i,j)} f(i-1,t)$.

The proof is complete.
\end{proof}

The table $f$ has a very nice property as stated in the following Lemma.

\begin{lem}\label{lm1}\hfill

For each pair $(i,j), 1\leq i\leq n, 1< j\leq m$, if $\delta(i,j)=1$ and $f(i,j)>1$, then there must be an index $r$ such that

\beql{eq37}
\begin{cases}
1\leq r<j,\\
y_r<y_j,\\
f(i,r)=f(i,j)-1.
\end{cases}
\eeq
\end{lem}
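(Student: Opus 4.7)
The plan is to take the index $t$ that realizes the maximum in Theorem \ref{th1} and show that this same $t$ already serves as the required $r$. Since $\delta(i,j)=1$ means $x_i=y_j$ and $f(i,j)>1$, the recurrence $f(i,j)=1+\max_{t\in\beta(i,j)} f(i-1,t)$ forces $\beta(i,j)$ to contain a maximizer $t$ satisfying $1\le t<j$, $y_t<x_i=y_j$, and $f(i-1,t)=f(i,j)-1$. Taking $r:=t$ already secures the first two conditions of \eqn{eq37}, so the real work is to upgrade the first coordinate from $i-1$ to $i$ and prove $f(i,t)=f(i,j)-1$.

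I would establish this equality by a two-sided bound. For the lower bound, any LCIS witnessing $f(i-1,t)$ is also a common increasing subsequence of $X_i$ and $Y_t$ ending on $y_t$, because $X_{i-1}$ is a prefix of $X_i$; hence $f(i,t)\ge f(i-1,t)=f(i,j)-1$. For the upper bound, take an LCIS $W\in LCIS(X_i,Y_t)$ of length $f(i,t)$. Its terminal character $y_t$ is matched to some position $p\le i$ in $X$ with $x_p=y_t$; since $y_t<y_j=x_i$, this position satisfies $p<i$. Consequently position $i$ in $X$ is unused by $W$, so appending $x_i=y_j$ (which is strictly greater than $y_t$) extends $W$ to a common increasing subsequence of $X_i$ and $Y_j$ ending on $y_j$ of length $f(i,t)+1$. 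This yields $f(i,j)\ge f(i,t)+1$, and the sandwich closes.

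The main obstacle, as noted above, is the gap between the recurrence (which talks about $f(i-1,\cdot)$) and the lemma (which demands $f(i,\cdot)$). The decisive ingredient that bridges the gap is the strict inequality $y_t<y_j=x_i$: it simultaneously places $t$ in $\beta(i,j)$ and guarantees that no LCIS of $X_i$ and $Y_t$ ending on $y_t$ can be using position $i$ of $X$, so $x_i$ remains available for the extension step. Without this strict inequality the upper-bound argument would collapse.
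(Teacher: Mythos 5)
Your argument is correct, and its first half coincides with the paper's: both take the maximizer $t\in\beta(i,j)$ from the recurrence \eqn{eq31}, using $f(i,j)>1$ to guarantee that $\beta(i,j)$ is nonempty and that $f(i-1,t)=f(i,j)-1$ with $1\le t<j$ and $y_t<x_i=y_j$. Where you diverge is in the bridge from $f(i-1,t)$ to $f(i,t)$. The paper disposes of this in one line: since $t\in\beta(i,j)$ gives $y_t<x_i$, we have $x_i\neq y_t$, i.e.\ $\delta(i,t)=0$, so the second case of \eqn{eq31} yields $f(i,t)=f(i-1,t)=f(i,j)-1$ immediately. You instead re-derive the equality by a two-sided sandwich: monotonicity in $i$ for the lower bound, and for the upper bound an explicit extension of a witness of $f(i,t)$ by $y_j$ (your observation that the terminal match of $y_t$ lies at some $p<i$ because $y_t<x_i$ is exactly what makes the appended occurrence of $x_i$ available). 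This is sound and self-contained, and your upper-bound construction is essentially a rerun of the $\geq$ direction in the proof of Theorem~\ref{th1}; what the paper's route buys is brevity, since the strict inequality $y_t<x_i$ is cashed in once, as $\delta(i,t)=0$, rather than through a fresh witness-manipulation argument.
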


\begin{proof}\hfill

It follows from \eqn{eq31} and $\delta(i,j)=1$ that $f(i,j)=1+\displaystyle\max_{t\in\beta(i,j)} f(i-1,t)=1+f(i-1,r)$, where $1\leq r<j$ and $y_r<y_j$.
It follows from \eqn{eq31} and $\delta(i,r)=0$ that $f(i,r)=f(i-1,r)=f(i,j)-1$.

The proof is complete.
\end{proof}

\section{Implementations}
Based on Theorem \ref{th1}, the length of LCISs for the given input sequences $X=x_1x_2\cdots x_n$ and $Y=y_1y_2\cdots y_m$ of size $n$ and $m$ respectively, can be computed in $O(nm)$ time and $O(nm)$ space by a standard dynamic programming algorithm.

\begin{algorithm}[H]
\SetAlgoLined
\KwIn{$X,Y$}
\KwOut{$f(i,j)$, the length of LCIS of $X_i$ and $Y_j$ ending on $y_j$}
\For{i=1 \emph{\KwTo} n}{
$\theta \leftarrow 0$\;
\For{j=1 \emph{\KwTo} m}{
$f(i,j) \leftarrow f(i-1,j)$\;
\lIf{$x_i>y_j$ {\bf and} $f(i,j)>\theta$}
{$\theta \leftarrow f(i,j)$\;}
\lIf{$\delta(i,j)=1$}
{$f(i,j) \leftarrow \theta+1$\;}
}}
\KwRet {$\max_{1\leq j\leq m} f(n,j)$}
\caption{LCIS}
\end{algorithm}

It is clear that the time and space complexities of the algorithm are both $O(nm)$.

When computing a particular row of the dynamic programming table, no rows before the previous row are required. Thus only two rows have to be kept in memory at a time. Without loss of generality, we can assume $n\geq m$ in the following discussion. Thus, we need only $\min\{n,m\}+1$ entries to compute the table.
Based on Hirschberg¡¯s divide-and-conquer method of solving the LCS problem in linear space \cite{6}, Yoshifumi Sakai presented a linear space algorithm for computing an LCIS. The algorithm is a bit involved. Based on the formula (\ref{eq31}), we can reduce the space cost of the algorithm $LCIS$ to $\min\{n,m\}+1$. The improved linear space algorithm can also produce an LCIS in adiitional $O(m)$ time.

A space efficient algorithm to compute $f(i,j)$ and an LCIS can be described as follows.

\begin{algorithm}[H]
\SetAlgoLined
\caption{Linear Space}
\KwIn{$X,Y$}
\KwOut{$f(i,j)$, the length of LCIS of $X_i$ and $Y_j$ ending on $y_j$}
\For{$i=1$ to $n$}{
$L(0) \leftarrow 0$\;
\For{$j=1$ to $m$}{
\lIf{$x_i>y_j$ {\bf and} $L(0)<L(j)$}
$L(0) \leftarrow L(j)$\;
\lIf{$x_i=y_j$}
{$L(j) \leftarrow 1+L(0)$\;}
}
}
$L(0)\leftarrow\max_{1\leq j\leq m} L(j)$\;
\KwRet {$L(0)$}
\end{algorithm}

In the algorithm above, the array $L$ of size $m+1$ is utilized to hold the appropriate entries of $f$.
At the time $f(i,j)$ to be computed, $L$ will hold the following entries:
\begin{itemize}
\item $L(k)=f(i,k)$ for $1\leq k <j-1$ (i.e., earlier entries in the current row);

\item $L(k)=f(i-1,k)$ for $k\geq j-1$ (i.e., entries in the previous row);

\item $L(0)=\theta$ (i.e., the previous entry computed, which has a maximal value).
\end{itemize}

Therefore, a total of $m+1$ entries is used in the algorithm. The time complexity of the algorithm is obviously $O(nm)$.
At the end of the algorithm, the maximal length is stored in $L(0)$. It follows from Lemma 1, we can produce an LCIS of $X$ and $Y$ by using the computed array $L$ as follows.
The LCIS is produced backwards. The elements can be found successively by a recursive scan algorithm $Next$ as follows.

\begin{algorithm}[H]
\SetAlgoLined
\caption{$Next(j,v,len)$}
\KwIn{The current position $j$; the last element $v$; the current length $len$.}
\KwOut{The next element.}
\lWhile{$L(j)\neq len$ {\bf or} $v\leq y_j$}
{$j \leftarrow j-1$\;}
\If{$j>0$ {\bf and} $len>0$}
{$Next(j,y_j,len-1)$\;
{\bf Print} $y_j$\;
}
\end{algorithm}

The LCIS can then be produced by an initial call $Next(m,\infty,L(0))$.
It is clear that the algorithm will produce an LCIS of $X$ and $Y$ in additional $O(m)$ time.

\noindent{\textbf{Example} ( \textit{continued} ).}\hfill

For the given input sequences of $X=(3,5,1,2,7,5,7)$ and $Y=(3,5,2,1,5,7)$,
$f(i,j)$, the length of an LCIS in $LCIS(X_i,Y_j)$ is listed below.
\[
f=
\begin{pmatrix}
1, 0, 0, 0, 0, 0\\
1, 2, 0, 0, 2, 0\\
1, 2, 0, 1, 2, 0\\
1, 2, 1, 1, 2, 0\\
1, 2, 1, 1, 2, 3\\
1, 2, 1, 1, 2, 3\\
1, 2, 1, 1, 2, 3
\end{pmatrix}
\]

It follows from the above that the length of any LCIS of $X$ and $Y$ is 3.
The LCIS $(1,5,7)$ of $X$ and $Y$ can be generated by the algorithm $Build$.

Finally, our main result can be completed in the following theorem.

\begin{theo}\label{th6}\hfill

Let $X=x_1x_2\cdots x_n$ and $Y=y_1y_2\cdots y_m$ be two input sequences over an alphabet $\sum$ of size $n$ and $m$ respectively.
A longest common increasing subsequences of $X$ and $Y$ can be computed in time $O(nm)$ using additional $\min\{n,m\}+1$ space.

\end{theo}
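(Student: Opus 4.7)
The plan is to prove Theorem \ref{th6} by analyzing Algorithm 2 (Linear Space) together with Algorithm 3 (\emph{Next}). Assume without loss of generality that $n\geq m$, so the array $L$ of size $m+1$ uses exactly $\min\{n,m\}+1$ cells. I would organize the proof into three parts: (i) correctness of Algorithm 2 for computing the last row of the table $f$ and the length of an LCIS; (ii) the stated time and space bounds; (iii) correctness and $O(m)$ time of the reconstruction by \emph{Next}.

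For part (i), the core task is to verify the invariant displayed in the three bullet items of the paper, i.e., at the moment the inner loop is about to process index $j$ in outer iteration $i$, one has $L(k)=f(i,k)$ for $k<j$, $L(k)=f(i-1,k)$ for $k\geq j$, and $L(0)=\max_{t\in\beta(i,j)} f(i-1,t)$ (with the convention that an empty maximum is $0$). I would prove this by induction on $(i,j)$ in row-major order. The base case $j=1$ holds because $L(0)$ is reset to $0$ at the top of each row and $\beta(i,1)=\emptyset$. For the inductive step, note that the update $L(0)\gets L(j)$ occurs exactly when $x_i>y_j$, which by the inductive hypothesis absorbs $f(i-1,j)$ into $L(0)$ precisely when $j$ becomes a new element of $\beta(i,j+1)$; this matches the definition of $\beta$. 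The overwriting $L(j)\gets 1+L(0)$ fires exactly when $x_i=y_j$, in which case $L(0)$ already equals $\max_{t\in\beta(i,j)} f(i-1,t)$, so the new $L(j)$ equals $f(i,j)$ by the third branch of Theorem \ref{th1}; in the non-match case $L(j)$ is untouched and still equals $f(i-1,j)=f(i,j)$ by the second branch. After the final iteration $i=n$, $L(j)=f(n,j)$ for all $j$, and the post-processing step $L(0)\gets\max_{j} L(j)$ yields the LCIS length. The main obstacle in this part is making the bookkeeping of the transition boundary between ``current row'' and ``previous row'' entries water-tight, because $L(0)$ is read and written within the same inner-loop body.

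For part (ii), the double loop in Algorithm 2 executes $nm$ inner iterations, each doing $O(1)$ work, so the running time is $O(nm)$. The only storage beyond the input is the array $L$ of $m+1=\min\{n,m\}+1$ entries.

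For part (iii), I would argue that the recursive call $Next(m,\infty,L(0))$ prints an LCIS of length $L(0)$ in reverse, using Lemma \ref{lm1}. At each invocation $Next(j,v,len)$, the while loop decreases $j$ until it finds $j'\leq j$ with $L(j')=len$ and $y_{j'}<v$. When $len$ equals the current LCIS length, such $j'$ must exist because $\max_k L(k)=L(0)=len$ and the trivial bound $v=\infty$ is then in force; at deeper recursion levels, one has $v=y_{j''}$ where $y_{j''}$ was the previously printed element and $len$ was decremented by one, so Lemma \ref{lm1} (applied at row $i=n$) guarantees the existence of a predecessor index $r<j''$ with $y_r<y_{j''}$ and $f(n,r)=f(n,j'')-1=len$. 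Hence the scan always succeeds, and the sequence of printed $y_{j}$'s is strictly increasing (by the $v>y_j$ guard) and consists of characters that appear in $X$ (because $f(n,j)>0$ forces $y_j$ to occur in $X_n$ by Definition \ref{df3}). Finally, since the index $j$ is monotonically non-increasing across the whole recursion, the total cost of all while-loop decrements is $O(m)$, yielding $O(m)$ additional time for reconstruction and completing the proof of Theorem \ref{th6}.
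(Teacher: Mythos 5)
Your parts (i) and (ii) are fine and are essentially the argument the paper itself intends: the row-major invariant you state (with $L(0)=\max_{t\in\beta(i,j)}f(i-1,t)$, empty max $=0$) is exactly the content of the paper's three bullet items, it is provable by the induction you sketch because the test $x_i>y_j$ and the test $x_i=y_j$ are mutually exclusive within one inner iteration, and the $O(nm)$ time / $m+1$ space count is immediate. The problem is part (iii), and the gap there is genuine, not cosmetic. First, Lemma~\ref{lm1} cannot be ``applied at row $i=n$'': its hypothesis is $\delta(n,j'')=1$, i.e.\ $x_n=y_{j''}$, which has no reason to hold for the index $j''$ found by the scan. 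The existence of some $r<j''$ with $y_r<y_{j''}$ and $f(n,r)=f(n,j'')-1$ is in fact true, but it needs an extra argument (e.g.\ take the minimal $i$ with $f(i,j'')=f(n,j'')$, apply Lemma~\ref{lm1} there, and then use monotonicity of $f(\cdot,j)$ in $i$ together with a descent argument to land on the value exactly $f(n,j'')-1$); as written, your citation of the lemma does not establish it.

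Second, and more seriously, your concluding step ``the printed sequence is strictly increasing and consists of characters that appear in $X$, hence it is an LCIS'' does not prove that the printed sequence is a \emph{subsequence of $X$}, and that statement is actually false for Algorithm~$Next$ as given. Take $X=(3,5,2)$ and $Y=(3,2,5)$. The last row of $f$ is $(1,1,2)$, so $L=(1,1,2)$ and $L(0)=2$. The call $Next(3,\infty,2)$ stops at $j=3$ ($y_3=5$), and the inner call $Next(3,5,1)$ stops at the rightmost admissible index $j=2$ ($L(2)=1$, $y_2=2<5$), so the algorithm outputs $(2,5)$; but $2$ occurs only after $5$ in $X$, so $(2,5)$ is not a common subsequence, while the unique LCIS is $(3,5)$ (the other admissible predecessor $j=1$). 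So knowing only the final row $f(n,\cdot)$, the greedy rightmost-predecessor scan can fail: each printed character occurs in $X$ individually, but the chain need not embed into $X$. This is exactly the point the paper itself waves away with ``It follows from Lemma~\ref{lm1}\dots'', so you have reproduced the paper's gap rather than closed it; a correct proof of Theorem~\ref{th6}'s reconstruction claim requires either retaining more information than the last row (or re-running the recurrence to check extendability in $X$), or a genuinely different reconstruction such as the divide-and-conquer of Sakai, and in addition the while loop needs a guard for the $len=0$ level, where it can run past $j=0$ since $L(0)$ has been overwritten by the maximum.
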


\section{Concluding remarks}
We have reformulated the problem of computing a longest common increasing subsequence of the two given input sequences $X$ and $Y$ of size $n$ and $m$ respectively. An extremely simple linear space algorithm based on the new formula can find a longest common increasing subsequence of $X$ and $Y$ in time $O(nm)$ using additional $\min\{n,m\}+1$ space. The time complexity of the new algorithm may be improved further.

\end{document}